\newtheorem{theorem}{Theorem}[section]
\newtheorem{corollary}[theorem]{Corollary}
\newtheorem{thm}{Theorem}[section]
\newtheorem{prop}[thm]{Proposition}
\newtheorem{defn}[thm]{Definition}
\newtheorem{ex}[thm]{Example}
\DeclareMathOperator{\TR}{tr}
\newcommand{\UNFs}[1]{{\Omega}_{n,m} {(\mathbb {R})}}  
\newcommand{\Gr}[1]{\mu_{n,m} (\mathbb R)} 
\title{A short history of frames and quantum designs}
\author{Bernhard G. Bodmann and John I. Haas}
\begin{document}

\begin{abstract} In this survey, we relate frame theory and quantum information theory, focusing on quantum $2$-designs. These are arrangements of weighted  subspaces which are in a specific sense optimal for quantum state tomography.  After a brief introduction, we discuss the role of POVMs in quantum theory, developing  the importance of  quantum $2$-designs.  In the final section, we collect many if not most known examples of quantum-$2$ designs to date.\end{abstract}

\maketitle


\section{Introduction}

This survey is concerned with the role of frames and designs in quantum information theory.
Frames are spanning families in Hilbert spaces that permit stable expansions of vectors.
In contrast to orthonormal bases, frames can incorporate linear dependencies. The flexibility
in their design can be used to realize goals that would be unattainable with minimal systems,
Riesz bases. There is an intriguing connection between objects studied in frame theory and
in quantum information theory. A Parseval frame is a family of vectors $\{f_j\}_{j \in J}$
in a Hilbert space $\mathcal H$, indexed by an at most 
countable set $J$, such that the rank-one Hermitians obtained from the frame vectors
resolve the identity,
$$
   \sum_{j \in J} f_j \otimes f_j^* = I \, .
$$
Here, $f^*$ is the linear functional $x \mapsto \langle x, f_j \rangle, x \in \mathcal H$.
When we abbreviate the rank-one Hermitians appearing in this sum as  $A_j = f_j \otimes f_j^*$, then
the family $\{A_j\}_{j \in J}$  happens to be a special case of a positive operator-valued measure (POVM),
a family of positive (semidefinite) operators  that sum to the identity.
As explained in the next section,
POVMs describe the statistics for outcomes of quantum measurements, and the question
of selecting an optimal one to estimate states based on the observed relative frequencies for outcomes
is closely related to frame theory. For general quantum measurements, each $A_j$ does not have to
be a rank-one operator. Indeed, choosing each $A_j$ of a POVM to be a multiple of an orthogonal projection
 has also been the subject of frame theory under the name of fusion frames. In the remainder of this paper,
we study POVMs with more structure motivated by goals in quantum state tomography.

Following Zauner's seminal work \cite{Zauner2011}, we define a {\it weighted quantum design} to be a sequence
of orthogonal projections $\{P_j\}_{j=1}^n$ in a Hilbert space $\mathcal H$, accompanied by weights $\{w_j\}_{j=1}^n$ in $\mathbb R^+$.
The sequence is $t$-coherent for $t \in \mathbb N$ if for each unitary $U$ on $\mathcal H$,
$$
   \sum_{j=1}^n w_j P_j^{\otimes t} = \sum_{j=1}^n w_j ( U P_j U^*)^{\otimes t} \, .
$$
A weighted quantum design that is $s$-coherent for all integers $1 \le s \le t$ is called a quantum $t$-design.
Hence, a quantum $t$-design is also a $1$-design and satisfies, by averaging over $U$ with respect to the normalized Haar measure,
$\sum_{j=1}^n w_j P_j = \sum_{j=1}^n w_j \TR[P_j/d] I$, where $d$ is the dimension of $\mathcal H$. If 
the weights are scaled by an overall factor so that $\sum_{j=1}^n w_j \TR[P_j/d]  = 1$, then
such a quantum $t$-design is a POVM with a special structure.

\section{From quantum state tomography to quantum designs}

Quantum theory evolved in the early 20th century because results of physical measurements
could not be explained with models of classical mechanics \cite{Einstein1905a}.
Today, the predictive power of
quantum theory is unrivaled in its accuracy in physical experiments \cite{Levineetal1997}. 
In the last 30 years, quantum theory has
received much attention because of the computational advantages afforded by the intrinsic parallelism
in quantum time evolution \cite{Feynman1982}. 

The achievements of quantum theory come with conceptual and philosophical issues about
its interpretation. One of the main challenges is that the predictions are probabilistic.
In order to compare classical and quantum theories, a statistical framework is 
necessary. Moreover, because the outcomes of experiments are formulated in
language based on boolean logic, any viable quantum theory needs to include
the means to describe classical systems.

We briefly review the axioms of quantum theory, here with the simplifying assumption that the underlying Hilbert space 
$\mathcal H$ is finite dimensional: States are trace-normalized non-negative operators 
on this Hilbert space. Observables describe the information gained from observing the outcome of 
an experiment. The minimal view of quantum theory is that it predicts probabilities for observing outcomes.
A classical theory is embedded in the quantum theory by associating with
(mutually exclusive) outcomes of an experiment mutually orthogonal subspaces of
the Hilbert space. 

A measurement is performed by a {\em quantum instrument} \cite{MR3196987}. In order to keep track of the classical information 
gained from the measurement, we use the tensor product Hilbert space $\mathcal H \otimes \ell^2(\Omega)$, where $\mathcal H$
is used to define the quantum state and $\ell^2(\Omega)$ the outcomes of the experiment recorded by the measurement apparatus. 

Associating with each outcome $\omega \in \Omega$
the diagonal projection operator $D_\omega$ on $\ell^2(\Omega)$ whose range consists of the one-dimensional
subspace of functions supported on the singleton set $\{\omega\}$, a quantum instrument  is modeled by a trace-preserving map $\mathcal E$ of the form
$$
   \mathcal E( W  ) = \sum_{\omega \in \Omega} \sum_{j\in J_\omega} M_{j,\omega} W M_{j,\omega}^* \otimes D_{\omega} \, 
$$
where $W$ is a state operator on the Hilbert space $\mathcal H$, $D_\omega$ is a diagonal projection on $\ell^2(\Omega)$ while $\sum_{j \in J_\omega} M_{j,\omega} W M_{j,\omega}^*$
describes the non-normalized quantum state in $\mathcal H$ after the measurement with outcome $\omega$.

More informally, the outcome of a measurement is recorded, while the quantum state undergoing the measurement
is mapped by conjugating it with Kraus operators belonging to each outcome, thus creating a special form of a quantum channel.
A general quantum channel on a finite-dimensional Hilbert space has the form $\mathcal E: W \mapsto \sum_{i=1}^n {M_i} W M_i^*$
where the sequence of Kraus operators $\{M_i\}_{i=1}^n$ satisfies $\sum_{i=1}^n M_i^* M_i = I$. 

If one is only interested in the statistics of the outcomes, then these are given by the probabilities
$p_\omega = \sum_{j \in J_\omega} \TR[ M^*_{j,\omega} M_{j,\omega}  W ] $ obtained after performing a partial trace
over the Hilbert space describing the quantum state. If the probabilities for the outcomes determine each state uniquely, then
the measurement is called informationally complete. From now on, we assume that each outcome is associated with one Kraus operator
and abbreviate $A_\omega =M^*_\omega M_\omega$. The collection $\{A_{\omega}: \omega \in \Omega\}$ is then a 
family of positive
semi-definite operators summing to the identity, also known as a positive-operator valued measure.

\begin{defn}
A positive operator-valued measure $\{A_j \}_{j=1}^n$ is called informationally complete if
the map $\mathcal A: W \mapsto (\TR[A_j W])_{j=1}^n$ is injective on the set of operators $\{W: W\ge 0, \TR W =1\}$.
\end{defn}

Since $\mathcal A$ is a linear map, informational completeness is equivalent to the set $\{A_j \}_{j=1}^n$ having as its real span the space of all Hermitian matrices.

\begin{theorem}
A positive operator-valued measure $\{A_j\}_{j=1}^n$ is informationally complete
if and only if  the real span of $\{A_j\}_{j=1}^n$ consists of all Hermitian operators.
\end{theorem}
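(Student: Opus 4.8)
The plan is to use the fact that, under the Hilbert--Schmidt pairing $\langle A,B\rangle = \TR[AB]$, the Hermitian operators on $\mathcal H$ form a real inner product space of dimension $d^2$ (where $d=\dim\mathcal H$), and to recast the condition on states as one about this linear structure. Each $A_j$ is positive semidefinite, hence Hermitian, so $\SPAN_{\mathbb R}\{A_j\}_{j=1}^n$ is a subspace of the space of Hermitian operators, and the restriction of $\mathcal A$ to Hermitian operators is a real-linear map into $\mathbb R^n$. The kernel of this restriction is precisely the set of Hermitian $H$ with $\TR[A_j H]=0$ for all $j$, that is, the orthogonal complement of $\SPAN_{\mathbb R}\{A_j\}$; so the span is all of the Hermitian operators if and only if this kernel is trivial. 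The theorem thus amounts to showing that $\mathcal A$ is injective on states exactly when it is injective on all Hermitian operators.

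For the direction in which the span is everything, I would take two states $W_1,W_2$ with $\mathcal A(W_1)=\mathcal A(W_2)$ and set $H=W_1-W_2$, a Hermitian operator satisfying $\TR[A_j H]=0$ for each $j$. Since the $A_j$ span all Hermitian operators, $H$ itself admits an expansion $H=\sum_j c_j A_j$ with real $c_j$, and then $\TR[H^2]=\sum_j c_j\,\TR[A_j H]=0$; as $\TR[H^2]=\HSNORM{H}^2$ for Hermitian $H$, this forces $H=0$ and hence $W_1=W_2$.

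The converse is where the genuine content lies, because informational completeness is defined as injectivity on the convex set of states rather than on a linear space, so a kernel element must be upgraded to an actual pair of distinct density operators. Arguing by contraposition, suppose the span is a proper subspace; then there is a nonzero Hermitian $H$ orthogonal to every $A_j$, i.e.\ $\TR[A_j H]=0$. Here the POVM normalization $\sum_j A_j = I$ is essential: it yields $\TR[H]=\sum_j \TR[A_j H]=0$, so every kernel element is automatically traceless and can be realized as a difference of states. The key construction is to perturb the maximally mixed state $\tfrac1d I$, which is positive definite and therefore an interior point of the state space: the operators $W_\pm = \tfrac1d I \pm \varepsilon H$ both have unit trace, and for any $\varepsilon$ with $0<\varepsilon \le 1/(d\,\NORM{H})$ (operator norm) they remain positive semidefinite, hence are states. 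They are distinct since $H\neq 0$, yet $\mathcal A(W_+)=\mathcal A(W_-)$ because $\TR[A_j H]=0$ for all $j$, contradicting injectivity on states. I expect the only delicate point to be this last construction, specifically the need to invoke both positivity of $\tfrac1d I$ and the trace constraint coming from the POVM condition.
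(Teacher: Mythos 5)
Your proof is correct, and its overall architecture matches the paper's: both arguments pass through Hilbert--Schmidt duality to identify the spanning condition with triviality of the kernel of $\mathcal A$ on Hermitians, and both use the POVM normalization $\sum_j A_j = I$ to deduce that any kernel element is traceless. The one place you genuinely diverge is the step that upgrades a nonzero traceless kernel element $H$ to a pair of distinct states. The paper uses the Jordan decomposition $H = H_+ - H_-$ with $H_\pm \ge 0$: equality of the traces $\TR[H_+]=\TR[H_-]$ lets it rescale so that $H_+$ and $H_-$ are themselves states with $\mathcal A(H_+)=\mathcal A(H_-)$, whence $H_+=H_-$ by informational completeness. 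You instead perturb the maximally mixed state, taking $W_\pm = \tfrac1d I \pm \varepsilon H$ with $\varepsilon \le 1/(d\,\|H\|)$, exploiting that $\tfrac1d I$ is an interior point of the state space. Both constructions are valid; the Jordan decomposition is marginally slicker in that it needs no norm estimate and no choice of $\varepsilon$, while your interior-point argument is the more robust template (it shows that injectivity on any neighborhood of a single positive-definite state already forces injectivity on all traceless Hermitians, which is useful in more general settings). Your other direction, expanding $H=\sum_j c_j A_j$ and computing $\TR[H^2]=0$, is just a concrete rendering of the paper's statement that a spanning family has trivial Hilbert--Schmidt orthocomplement; the content is identical.
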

\begin{proof}
Assume the POVM is informationally complete and a Hermitian $H$ gives $\mathcal A(H)=0$. By summing entries of $\mathcal A(H)$, we get
$\sum_{j=1}^n \TR[A_j H] = \TR[H] = 0$, so $H$ has vanishing trace. 
We need to exclude $H \ne 0$.
Splitting $H=H_+ - H_-$, where
$H_+, H_- \ge 0$ yields $\TR[H_+] = \TR[H_-]$. If each of these traces is non-zero, then
we can assume $\TR[H_+]=\TR[H_-]=1$ without loss of generality, and thus
 $H_+$ and $H_-$ are  states that give $\mathcal A(H_+) = \mathcal A(H_-)$, so by 
 completeness of the POVM, $H_+=H_-$ and thus $H=0$. Since the kernel is trivial
 the range of the adjoint map $\mathcal A^*$ is the orthogonal complement, consisting of all Hermitians.
 However, the adjoint map is simply $\mathcal A^*(x) = \sum_{j=1}^n x_j A_j$ for $x \in \mathbb R^n$,
 so the POVM spans the space of Hermitians.
 
 Conversely, if the POVM is spanning, and $W_1, W_2$ are states with $W_1 \ne W_2$, then
 $H = W_1 - W_2 \ne 0$ and $\mathcal A(H) \ne 0$, so $\mathcal A(W_1) \ne \mathcal A(W_2)$.
 We conclude the POVM is informationally complete.
 \end{proof}
 
 The dimension of the real vector space of Hermitians on a $d$-dimensional complex Hilbert space
 is $d^2$. Because of the spanning property, this dimension is the minimum size required for informational completeness. 
 
 \begin{corollary}
 If a POVM $\{ A_j \}_{j=1}^n$ on a $d$-dimensional Hilbert space is informationally complete, then $n \ge d^2$.
 \end{corollary}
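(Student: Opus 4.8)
The plan is to deduce the corollary directly from the preceding theorem together with a single dimension count; no separate argument is really needed. First I would invoke the theorem: since the POVM $\{A_j\}_{j=1}^n$ is informationally complete, its real linear span must be the entire real vector space of Hermitian operators on $\mathcal H$. This reduces the claim to comparing the number of operators against the dimension of the space they span.

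Next I would pin down that dimension. A Hermitian operator $H$ on a $d$-dimensional complex Hilbert space is determined by its $d$ real diagonal entries together with the off-diagonal entries $H_{ij}$ for $i<j$, each of which is an arbitrary complex number contributing two real parameters, while the entries below the diagonal are forced by the constraint $H_{ji}=\overline{H_{ij}}$. Counting these gives a real dimension of $d+2\binom{d}{2}=d^2$, which is exactly the value asserted in the paragraph preceding the statement.

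Finally I would apply the elementary fact that a spanning family of a finite-dimensional vector space must contain at least as many elements as the dimension of that space, since any spanning set can be thinned to a basis. As the $n$ operators $A_j$ span a real vector space of dimension $d^2$, it follows at once that $n\ge d^2$.

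The honest assessment is that there is no substantive obstacle here: the only quantitative input is the dimension count for the Hermitians, and everything else is a one-line consequence of the theorem and of the basic linear-algebra principle that a spanning set is never smaller than a basis. The corollary is best viewed as a structural remark recording that informational completeness forces the POVM to populate a $d^2$-dimensional space, and hence to have at least $d^2$ elements.
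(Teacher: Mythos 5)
Your argument is correct and matches the paper's reasoning exactly: the paper likewise derives the corollary from the preceding theorem by noting that the real space of Hermitians has dimension $d^2$ and that a spanning family cannot be smaller than the dimension. Your explicit count $d + 2\binom{d}{2} = d^2$ simply fills in a detail the paper states without justification.
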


In finite dimensions, the spanning property is equivalent to $\{A_j \}_{j=1}^n$ being a frame for the inner product space of Hermitian matrices, equipped with the Hilbert-Schmidt inner product. The reconstruction of the state then proceeds by constructing the pseudo-inverse $H$
of the Gram matrix $G=(G_{j,l})_{j, l =1}^n$ with entries $G_{j, l } = \TR[A_j A_l]$. Let $\{B_j\}_{j=1}^n$ be the operator-valued canonical dual
of the POVM, so 
if $H=(H_{j,l})_{j, l =1}^n$ has entries $H_{j, l } = \TR[B_j B_l]$, then $GHG=G$ and $HGH=H$.
This implies that for each operator on $\mathcal H$, in particular each positive, trace-normalized $W$,
$$
   W = \sum_{j=1}^n B_j \TR[W A_j] \, .
$$
Thus, $W$ is expressed linearly in terms of the probabilities of observing outcomes.
Next, we consider the error induced by measuring relative frequencies $\hat p_j$ for outcomes instead of probabilities 
$p_j=\TR[WA_j]$. We quantify the error by the averaged squared Euclidean distance
between $W$ and the recovered state. In order to eliminate any bias, we randomize the 
measurement by conjugating $W$ with a unitary, selected uniformly at random.

\begin{defn}
Let $W$ be a state.
The mean-squared error associated with a POVM $\{A_j\}_{j=1}^n$ and the randomized input state $W^U\equiv U W U^*$ with the unitary $U$ chosen uniformly at random
is
$$
  \mathbb E[ \| W^U - \widehat W^U \|^2 ] = \mathbb E[ \sum_{j, l} (p^U_j - \hat p^U_j)(p^U_l - \hat p^U_l) \TR [B_j B_l]  ]\, .
$$
\end{defn}

The expression for the mean-squared error can be simplified substantially.
We follow Scott's analysis \cite{MR2269701} and assume, for simplicity, that only one outcome has been observed.
Consequently, $p_j^U = \TR[U W U^* A_j]$ and
$\{\hat p^U_j\}_{j=1}^n$ is a collection of $\{0,1\}$-valued random variables with 
$\mathbb E[ \hat p^U_j] = p^U_j$
and $\sum_{j=1}^n \hat p^U_j = 1$.

\begin{prop}[Scott \cite{MR2269701}]
With $W$, $\{A_j\}_{j=1}^n$ and $U$ as in the above definition, and a $\{0,1\}$-valued random probability vector $\hat p$
as described, 
$$ \mathbb E[ \| W^U - \widehat W^U \|^2 ]
=  \sum_j \TR[A_j/d] \TR[B_j^2] - \TR[W^2]  \, .
$$
\end{prop}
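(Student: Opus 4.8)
The plan is to evaluate the expectation in two stages, first averaging over the random outcome vector $\hat p^U$ with the unitary $U$ held fixed, and then averaging over the Haar-random $U$. Conditioning on $U$, the vector $\hat p^U$ records a single outcome, so it is a categorical random variable with $\hat p_j^U \in \{0,1\}$, $\sum_j \hat p_j^U = 1$, $\mathbb E[\hat p_j^U] = p_j^U$, and $\mathbb E[\hat p_j^U \hat p_l^U] = \delta_{jl}\, p_j^U$, since no two distinct outcomes can occur simultaneously. First I would use these moments to collapse the inner expectation to
\[
  \mathbb E_{\hat p \mid U}\big[(p_j^U - \hat p_j^U)(p_l^U - \hat p_l^U)\big] = \delta_{jl}\, p_j^U - p_j^U p_l^U ,
\]
so that, after taking $\mathbb E_U$, the mean-squared error splits into a diagonal term $\mathbb E_U\big[\sum_j p_j^U \TR[B_j^2]\big]$ minus a quadratic term $\mathbb E_U\big[\sum_{j,l} p_j^U p_l^U \TR[B_j B_l]\big]$.

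For the diagonal term I would invoke the first moment of the Haar measure: since $\mathbb E_U[U W U^*] = (\TR[W]/d)\, I = I/d$ by the $1$-design averaging already used in the introduction, we get $\mathbb E_U[p_j^U] = \TR[A_j/d]$, so this term is exactly $\sum_j \TR[A_j/d]\,\TR[B_j^2]$, matching the first half of the claim. The quadratic term is the crux. Writing $\rho = U W U^*$ and $p_j^U p_l^U = \TR[(\rho\otimes\rho)(A_j\otimes A_l)]$, I would use the second-moment formula $\mathbb E_U[\rho\otimes\rho] = \alpha\, I\otimes I + \beta\, S$, where $S$ is the flip operator on $\mathcal H\otimes\mathcal H$; the scalars $\alpha,\beta$ are pinned down by pairing with $I\otimes I$ and with $S$, which yields the linear system
\[
  \alpha d^2 + \beta d = 1, \qquad \alpha d + \beta d^2 = \TR[W^2],
\]
using $\TR[\rho]=1$ and the swap identity $\TR[S(\rho\otimes\rho)] = \TR[\rho^2] = \TR[W^2]$. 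Consequently $\mathbb E_U[p_j^U p_l^U] = \alpha\,\TR[A_j]\TR[A_l] + \beta\,\TR[A_j A_l]$.

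The final step is to contract this against $\TR[B_j B_l]$ using the frame duality between $\{A_j\}$ and $\{B_j\}$. For the $\beta$ part I would recognize $\sum_{j,l}\TR[A_j A_l]\TR[B_j B_l] = \TR[GH]$, where $GH$ is idempotent by the relations $GHG=G$, $HGH=H$ and projects onto the range of $G$, hence has trace equal to $d^2$, the dimension of the space of Hermitian operators. For the $\alpha$ part I would apply the reconstruction formula with $X = I$, namely $\sum_j \TR[A_j] B_j = I$, to get $\sum_{j,l}\TR[A_j]\TR[A_l]\TR[B_j B_l] = \HSNORM{\textstyle\sum_j \TR[A_j] B_j}^2 = \TR[I] = d$. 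The quadratic term therefore equals $\alpha d + \beta d^2$, which by the second equation of the linear system above is precisely $\TR[W^2]$. Subtracting the two terms gives $\sum_j \TR[A_j/d]\,\TR[B_j^2] - \TR[W^2]$, as claimed.

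I expect the main obstacle to be the quadratic term: correctly invoking the Schur--Weyl (second-moment) structure of the Haar average, and then recognizing that the two frame-theoretic contractions conspire to reproduce exactly the quantity $\alpha d + \beta d^2 = \TR[W^2]$. The pleasant point is that no finer information about the particular POVM enters — only $\sum_j A_j = I$, the reconstruction identity, and the pseudo-inverse relations $GHG=G$, $HGH=H$.
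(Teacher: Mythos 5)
Your proof is correct, but it takes a genuinely different route through the key step. Both arguments handle the conditional second moment $\mathbb E[\hat p^U_j \hat p^U_l]=\delta_{j,l}p^U_j$ and the linear term (via $\mathbb E_U[UWU^*]=I/d$) identically. The difference is the quadratic term $\sum_{j,l}p^U_jp^U_l\TR[B_jB_l]$: the paper disposes of it \emph{pointwise in $U$}, with no Haar averaging at all, by recognizing it as $\HSNORM{\sum_j p^U_j B_j}^2 = \TR[(W^U)^2]=\TR[W^2]$ directly from the reconstruction formula $W^U=\sum_j B_j\TR[W^U A_j]$ and unitary invariance of the trace. You instead Haar-average first, invoking the second-moment (Schur--Weyl) formula $\mathbb E_U[\rho\otimes\rho]=\alpha\, I\otimes I+\beta\, S$, and then need two additional frame-theoretic contractions, $\sum_{j,l}\TR[A_j]\TR[A_l]\TR[B_jB_l]=d$ and $\TR[GH]=d^2$, to land on $\alpha d+\beta d^2=\TR[W^2]$. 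Your computations check out (in particular $\TR[GH]=\mathrm{rank}(G)$ since $GH$ is idempotent), though note that $\mathrm{rank}(G)=d^2$ uses informational completeness of the POVM --- the same standing assumption the reconstruction formula already requires, so nothing is lost. What your route buys is an explicit display of the twirling machinery that reappears in the $2$-design discussion later in the paper; what the paper's route buys is brevity and the observation that the quadratic term is deterministic, which your calculation recovers only after averaging.
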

\begin{proof}
From the properties of $\hat p$,  $\mathbb E[ \hat p^U_j \hat p^U_l ] = p^U_j \delta_{j,l} $.
Consequently,
$$
 \mathbb E[ \| W - \widehat W \|^2 ] = \mathbb E[\sum_{j, l} (p^U_j\delta_{j,l} -p^U_j p^U_l) \TR[B_j B_l]
 = \sum_j \mathbb E[p^U _j ]\TR[B_j^2] - \TR[W^2] \, . 
$$
Finally, averaging with respect to $U$, we get
$\mathbb E[p_j^U] = \mathbb E[ \TR[U W U^* A_j]] = \TR[A_j]/d$.
\end{proof}

Hence, the optimal measurement for the mean-squared error minimizes the first term,
while the second term remains state dependent. Scott formulated a lower bound for the first term \cite{MR2269701}
and characterizes cases of equality as $\{A_j\}$ being related to weighted complex projective 2-designs.

\begin{defn}
A sequence of rank-one projection operators $\{\pi_j\}_{j=1}^n$ with weights $\{w_j\}_{j=1}^n$
is a weighted projective 2-design if 
$$
   \sum_{j=1}^n w_j \pi_j \otimes \pi_j = \frac{2}{d(d+1)} \Pi_{\mathrm{sym}}
$$
where $ \Pi_{\mathrm{sym}}$ is the projection onto the symmetric subspace of $\mathcal H \otimes \mathcal H$,
meaning
$$
  \Pi_{\mathrm{sym}} = \frac 1 2 \sum_{j,k=1}^d (E_{j,j} \otimes E_{k,k} + E_{j,k}\otimes E_{k,j} ) \, 
$$
in terms of matrix units $E_{j,k} = e_j \otimes e_k^*$.
\end{defn}

\begin{thm}[Scott \cite{MR2269701}]
Given a POVM $\{A_j\}_{j=1}^n$ on a $d$-dimensional Hilbert space $\mathcal H$ 
and the operator-valued canonical dual $\{B_j\}_{j=1}^n$, then
$$
   \sum_j \TR[A_j/d] \TR[B_j^2] \ge \frac 1 d + (d^2-1)(d+1)/d \, .
$$
Moreover, equality holds in this inequality if and only if $\{A_j\}_{j=1}^n$
is a rank-one POVM and $w_j= \TR[A_j]/d$, $\pi_j = A_j / \TR[A_j]$ forms a weighted projective 2-design. 
\end{thm}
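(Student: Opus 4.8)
The plan is to reduce everything to two scalar identities and a pair of Cauchy--Schwarz estimates, after first simplifying the target. Writing $S$ for the frame operator $S(X)=\sum_j \TR[A_jX]A_j$ on the real Hilbert--Schmidt space of Hermitians, the canonical dual is $B_j=S^{-1}A_j$. A direct computation collapses the right-hand side to $\tfrac1d+(d^2-1)(d+1)/d=d^2+d-1$, so it suffices to prove $\sum_j w_j\TR[B_j^2]\ge d^2+d-1$ with $w_j=\TR[A_j]/d$ and $\sum_j w_j=1$. I would then split each operator along the orthogonal decomposition into its trace-proportional and traceless parts, $A_j=\tfrac{a_j}{d}I+(A_j)_0$ and $B_j=\tfrac{b_j}{d}I+(B_j)_0$ with $a_j=\TR[A_j]$, $b_j=\TR[B_j]$. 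This turns the target into $\Phi=\tfrac1{d^2}\sum_j a_jb_j^2+\tfrac1d\sum_j a_j\TR[(B_j)_0^2]$, a trace part $T_1$ plus a traceless part $T_2$ to be bounded separately.

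The two governing identities both come from the reconstruction formula $\sum_j\TR[A_jX]B_j=X$, valid for all Hermitian $X$. Taking traces gives $\sum_j b_jA_j=I$, hence $\sum_j a_jb_j=d$. Separately, the cross-Gram matrix with entries $(GH)_{jl}=\TR[A_jB_l]$ is idempotent by $GHG=G$, $HGH=H$, so it is the orthogonal projection onto the range of the analysis operator and has rank equal to the dimension $d^2$ of the space of Hermitians; taking its trace gives $\sum_j\TR[A_jB_j]=d^2$. Subtracting the trace-part contribution $\tfrac1d\sum_j a_jb_j=1$ isolates the traceless constraint $\sum_j\TR[(A_j)_0(B_j)_0]=d^2-1$. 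These are exactly the linear data needed to feed the two estimates.

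For $T_1$, Cauchy--Schwarz against the weights $a_j$ gives $\sum_j a_jb_j^2\ge(\sum_j a_jb_j)^2/\sum_j a_j=d^2/d=d$, so $T_1\ge\tfrac1d$, with equality iff $b_j\equiv1$. For $T_2$ I would run a two-step estimate: first $d^2-1=\sum_j\TR[(A_j)_0(B_j)_0]\le\sum_j\HSNORM{(A_j)_0}\HSNORM{(B_j)_0}$, then Cauchy--Schwarz with weights $a_j$, to obtain $(d^2-1)^2\le\big(\sum_j\HSNORM{(A_j)_0}^2/a_j\big)\big(\sum_j a_j\HSNORM{(B_j)_0}^2\big)$. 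The first factor is controlled by the scalar inequality $\TR[A_j^2]/a_j\le a_j$, valid for positive semidefinite $A_j$ with equality precisely when $A_j$ has rank one; since $\HSNORM{(A_j)_0}^2/a_j=\TR[A_j^2]/a_j-a_j/d$ and $\sum_j a_j=d$, this yields $\sum_j\HSNORM{(A_j)_0}^2/a_j\le d-1$. Hence $T_2\ge\tfrac1d(d^2-1)^2/(d-1)=(d^2-1)(d+1)/d$, and adding the two parts gives the claimed bound.

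The delicate part is the equality analysis, which I expect to be the main obstacle. Equality forces all conditions at once: $b_j\equiv1$; each $A_j$ of rank one, say $A_j=a_j\pi_j$; and, from the two Cauchy--Schwarz steps, $(B_j)_0$ positively proportional to $(A_j)_0=a_j(\pi_j-I/d)$ with a single universal ratio, which pins down $B_j=\nu\pi_j+\tfrac{1-\nu}{d}I$ with $\nu$ independent of $j$. Substituting this into $\sum_j\TR[A_jX]B_j=X$ expresses $\sum_j w_j\TR[\pi_jX]\pi_j$ as an affine function of $X$; because the $\pi_j$ are rank one, each $\pi_j\otimes\pi_j$ is supported on the symmetric subspace, so the second-moment operator $\sum_j w_j\pi_j\otimes\pi_j$ carries no antisymmetric part, and this is exactly what forces $\nu=d+1$ and $\sum_j w_j\pi_j\otimes\pi_j=\tfrac{2}{d(d+1)}\Pi_{\mathrm{sym}}$, the weighted $2$-design identity. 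Conversely, inserting the design identity into the reconstruction formula exhibits $\{(d+1)\pi_j-I\}$ as a dual frame with $\TR[B_j^2]=(d+1)^2-2(d+1)+d=d^2+d-1$; confirming that this dual frame is genuinely the canonical dual (equivalently $SB_j=A_j$) is the step that demands the most care and is where I would concentrate the detailed work.
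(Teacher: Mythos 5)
Your proof of the inequality itself is correct and takes a genuinely different route from the paper's. The paper follows Scott by rescaling: it forms the frame $\mathcal P$ with $P_j=\sqrt{d/\TR[A_j]}\,A_j$, observes that its frame operator has eigenvalue $d$ on the identity and trace at most $d^2$ (via $\TR[A_j^2]\le(\TR[A_j])^2$), and minimizes the sum of reciprocal eigenvalues by convexity. You instead work directly with the dual, split off the trace part, and run three Cauchy--Schwarz estimates fed by the two linear identities $\sum_j a_jb_j=d$ and $\sum_j\TR[A_jB_j]=d^2$. The arithmetic checks out ($\tfrac1d+(d^2-1)(d+1)/d=d^2+d-1$, $T_1\ge\tfrac1d$, $T_2\ge(d^2-1)^2/(d(d-1))$), and since both identities hold for \emph{any} dual frame, your argument is actually more robust than the paper's, which needs the (unjustified) claim that $\{\sqrt{w_j}B_j\}$ is the canonical dual of $\mathcal P$. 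Your forward equality analysis (forcing $b_j\equiv1$, rank one, $B_j=\nu\pi_j+(1-\nu)I/d$, then $\nu=d+1$ from the vanishing antisymmetric part of the second-moment operator) is also sound. Both arguments tacitly assume informational completeness, since otherwise $\sum_j\TR[A_jB_j]$ equals the rank of the span rather than $d^2$.

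The gap is exactly where you predicted, and it cannot be closed as you set things up: for a weighted $2$-design with unequal weights, $(d+1)\pi_j-I$ is a dual frame of $\{A_j\}$ but is \emph{not} the canonical dual $S^{-1}A_j$. The obstruction is that $S(X)=\sum_j\TR[A_jX]A_j=d^2\sum_jw_j^2\TR[\pi_jX]\pi_j$ involves the squared weights, while the $2$-design condition only controls $\Phi(X)=\sum_jw_j\TR[\pi_jX]\pi_j$. Concretely, in $d=2$ take the standard basis with weights $1/6$ and the three vectors $(1,\omega^k)/\sqrt2$ with weights $2/9$ (a genuine weighted $2$-design); then $S\sigma_z=\tfrac19\sigma_z$ but $S\sigma_x=\tfrac{4}{27}\sigma_x$, so $S$ is not scalar on traceless Hermitians, $S^{-1}A_j\ne(d+1)\pi_j-I$, and $\sum_jw_j\TR[(S^{-1}A_j)^2]=606/121>5$. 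Hence with $B_j:=S^{-1}A_j$ only the ``only if'' direction of the equality statement survives. The paper's proof evades (rather than resolves) this by silently replacing the canonical dual of the POVM with the canonical dual of the rescaled frame $\mathcal P$, i.e.\ $B_j=\tfrac1d\Phi^{-1}(\pi_j)$, which for a $2$-design is exactly $(d+1)\pi_j-I$; with that convention (which is Scott's) both directions go through. To finish, either adopt that definition of the dual throughout, state the equality case as ``there exists a dual frame attaining the bound iff the design property holds'' (which your argument proves verbatim), or restrict the converse to equal weights.
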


\begin{proof}
In order to find a lower bound for the first term, Scott notices that it is the trace 
of an operator-valued frame operator belonging to the frame $\mathcal Q$ with vectors $Q_j =\sqrt{\TR[A_j/d]} B_j$.
The eigenvalues of the frame operator of $\mathcal Q$ are controlled by its operator-valued canonical dual
$\mathcal P$ with vectors $P_j = \sqrt{d/\TR[A_j]} A_j$.
To begin with, we note
$$
   \sum_{j=1}^n P_j \TR[I P_j] = d \sum_{j=1}^n A_j = d I 
$$
so one of the eigenvalues of $\mathcal P$  equals $d$, hence $\mathcal Q$ has an eigenvalue $1/d$.

We also observe 
$\TR[P_j^2] = d \TR[ A_j^2]/\TR[A_j] \le d \TR[ A_j]$ because $0 \le A_j \le I$, so $\TR[A_j^2] \le (\TR[A_j])^2$.
Hence, $\sum_{j=1}^n \TR[ P_j^2] \le d \sum_{j=1}^n \TR[ A_j ] = d^2$.

Since the eigenvalues of $\mathcal Q$ are inverses of those of $\mathcal P$, we see that
we want to optimize $\sigma(\lambda) \equiv \sum_{j=1}^{d^2} \lambda_j^{-1}$ subject to $\sum_{j=1}^{d^2} \lambda_j \le d^2$
and $\lambda_1 = d $. By the convexity of  the function $x \mapsto 1/x$, the minimum of $\sigma$
cannot be smaller than for the choice $\lambda_1=d$ and $\lambda_j = (d^2-d)/(d^2-1)=d/(d+1)$,
so
$$
   \sigma(\lambda) = \frac 1 d + (d^2-1)(d+1)/d
$$
Equality is achieved if and only if  each $A_j$ is of rank one and 
$\mathcal P$ has two eigenvalues (ignoring multiplicity), $\lambda_1=d$ and
$\lambda_j = d/(d+1)$. Consequently, each $\pi_j = A_j /\TR[A_j]$ is a rank-one projection
and for each Hermitian $H$ with vanishing trace,
$$
     \sum_{j=1}^n A_j \TR[ H A_j] / \TR[A_j] = \frac{1}{d+1} H \, .
$$ 

Finally, we use the characterization of weighted projective 2-designs: Assuming each $A_j$ is rank one, let $w_j = \TR[A_j]/d$ and $\pi_j = A_j / \TR[A_j]$, then
$$
  \sum_{j,l=1}^n w_j w_l |\TR[ \pi_j \pi_l]|^2 = \frac{1}{d^4} \sum_{j,l=1}^n |\TR[ P_j P_l ] |^2
$$
The sum is identified as the squared Hilbert-Schmidt norm of the frame operator of $\mathcal P$,
so
$$
  \sum_{j,l=1}^n w_j w_l |\TR[ \pi_j \pi_l]|^2  = \frac{1}{d^4} ( d^2 + (d^2-1)d^2/(d+1)^2) = \frac{2}{d(d+1)} \, .
$$
This last identity is equivalent to $\sum_{j=1}^n w_j \pi_j \otimes \pi_j = \frac{2}{d(d+1)} \Pi_{\mathrm{sym}}$
\cite{MR2269701}[Theorem 6], hence equality holds in the bound if and only if the weights and rank-one orthogonal projections form a weighted complex projective 2-design.
\end{proof}
%

We conclude that $\{A_j\}_{j=1}^n$ meets the bound for the first term in the mean-squared error
if and only if
the weighted projections associated with the POVM
form a weighted complex projective 2-design. 

\begin{corollary} Let $\{A_j\}_{j=1}^n$ be a POVM, then
the mean-squared error for linear reconstruction from observing one outcome is bounded below by
$$
 \mathbb E[ \| W^U - \widehat W^U \|^2 ] \ge \frac 1 d + (d^2-1)(d+1)/d - \TR[W^2]
$$
and equality holds if and only if each $A_j$ in the POVM is rank one and $w_j=\TR[A_j/d]$, $\pi_j = A_j / |TR[A_j]$
form a weighted complex projective 2-design.
\end{corollary}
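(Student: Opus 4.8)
The plan is to obtain this statement as an immediate consequence of the Proposition and the Theorem that precede it, both of which have already been established. First I would invoke the Proposition to replace the mean-squared error by its closed form, so that in the randomized single-outcome setting of the Definition one has $\mathbb E[ \| W^U - \widehat W^U \|^2 ] = \sum_j \TR[A_j/d] \TR[B_j^2] - \TR[W^2]$. This isolates the only state-dependent contribution, namely the additive term $-\TR[W^2]$, which is independent of the choice of POVM.

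Next I would apply the Theorem's lower bound $\sum_j \TR[A_j/d] \TR[B_j^2] \ge \frac 1 d + (d^2-1)(d+1)/d$ to the first term. Adding the fixed quantity $-\TR[W^2]$ to both sides then yields precisely the claimed inequality for the mean-squared error. For the equality clause, I would observe that because $-\TR[W^2]$ is constant with respect to the POVM, equality in the mean-squared error bound holds if and only if equality holds in the Theorem's bound on the first term. The Theorem already characterizes this case: each $A_j$ must be rank one, and with $w_j = \TR[A_j]/d$ and $\pi_j = A_j/\TR[A_j]$ the family must form a weighted complex projective $2$-design. This is exactly the stated equality condition.

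Since the Corollary is purely a repackaging of two previously proved results, I expect no genuine obstacle. The only point meriting (minimal) care is that the optimization is over the POVM while the input state $W$ is held fixed, so that the equality characterization must be independent of $W$; this is immediate here because the state enters the mean-squared error only through the additive $-\TR[W^2]$, leaving the extremal condition entirely governed by the first term.
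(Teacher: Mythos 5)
Your proposal is correct and matches the paper's own (implicit) derivation exactly: the corollary is obtained by substituting the Proposition's closed form for the mean-squared error and then applying the Theorem's lower bound and equality characterization to the state-independent first term. Nothing further is needed.
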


To summarize, if weighted complex projective 2-designs exist, then they are optimal for quantum state
tomography with linear reconstruction using the operator-valued canonical dual of the associated POVM.

\section{Known quantum $2$-designs}
We conclude by collecting known examples of quantum designs, highlighting connections between the design properties and geometric properties.  We briefly sketch history and construction principles, occasionally demonstrating concrete examples and highlighting open problems along the way.

The presentation  is organized as follows.  In Sections~\ref{levdesigns} and~\ref{weighted}, we examine quantum $2$-designs comprised of rank one projectors,  encoding the elements of each design as the columns of unit norm tight frames,
$$
F=[f_1 \, f_2 \, \dots \, f_n],
$$
from which the POVM's projection arises via
$$
P_j := f_j \otimes f_j^*, j \in \{1,.,2,\dots, n\}.
$$  

In the first case, Section~\ref{levdesigns},
we examine equally weighted complex projective $2$-designs, a special class of POVMs whose quantum $2$-design property is characterized by minimal {\it coherence},
$$ \mu(F)= \max_{j\neq l} |\langle f_j, f_l\rangle|,$$
in particular achieving the lower bound given by Levenstein~\cite{Levenshtein1992, lev:2017, levinprep} (see Theorem~\ref{levbd} below).  In Section~\ref{weighted}, we examine quantum designs manifesting as weighted complex $2$-designs, for which minimal coherence - as verified in some cases~\cite{MR3557826} - is a byproduct, but not a necessary condition for the the quantum $2$-design property.  In these cases, the $2$-design property is characterized according the weighted framework of Roy and Scott~\cite{RoyScott2007} (see Theorem~\ref{th_wtd}).
Finally, in Section~\ref{higherrank}, we point out families of quantum $2$-designs manifesting whose values are projections of rank greater than one.

\subsection{Rank one complex projective $2$-designs and Levenstein's bound}\label{levdesigns}
The  quantum $2$-design property of the three families in this section are characterized by Levenstein's 2nd degree coherence bound~\cite{Levenshtein1992, lev:2017, levinprep}. 

\begin{theorem}\label{levbd}[Levenstein; \cite{Levenshtein1992}, see also~\cite{lev:2017, levinprep}]
	Given a unit norm frame $\Phi=\{\phi_j\}_{j=1}^n \subset \mathbb C^d$, then the corresponding rank one operators $\Phi$ form an equally weighted complex projective $2$-design (which is at most a $3$-design) if and only if the following lower bound on frame coherence is met with equality:
$$
\mu(\Phi) \geq \sqrt{\frac {2n - d(d+1)}{(n-d)(d+1)}}.
$$
\end{theorem}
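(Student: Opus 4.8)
The plan is to route the argument through the \emph{second frame potential} $\sum_{j,l=1}^n |\langle \phi_j,\phi_l\rangle|^4$ and to recognize it, for unit vectors, as a Hilbert--Schmidt norm on the symmetric subspace of $\mathcal H\otimes\mathcal H$. Writing $\pi_j=\phi_j\otimes\phi_j^*$ and combining $|\langle\phi_j,\phi_l\rangle|^2=\TR[\pi_j\pi_l]$ with $\TR[(A\otimes A)(B\otimes B)]=(\TR[AB])^2$, one gets
$$
   \sum_{j,l=1}^n |\langle \phi_j,\phi_l\rangle|^4 = \TR[S^2], \qquad S:=\sum_{j=1}^n \pi_j\otimes\pi_j,
$$
where $S\ge 0$ has range inside the symmetric subspace of dimension $d(d+1)/2$ and satisfies $\TR[S]=\sum_j (\TR\pi_j)^2=n$.

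First I would extract the two ingredients that pin the potential. Since $S$ is supported on a space of dimension $d(d+1)/2$, Cauchy--Schwarz gives $\TR[S^2]\ge (\TR S)^2/(d(d+1)/2)=2n^2/(d(d+1))$, with equality exactly when $S$ is a scalar multiple of $\Pi_{\mathrm{sym}}$; since $\TR[S]=n$, this says $S=\frac{2n}{d(d+1)}\Pi_{\mathrm{sym}}$, i.e. $\{\phi_j\}$ is an equally weighted complex projective $2$-design with $w_j=1/n$. Because a $2$-design is in particular a $1$-design, the frame is tight, which pins the first potential: $\sum_{j,l}|\langle\phi_j,\phi_l\rangle|^2=\TR[(\sum_j\pi_j)^2]=\TR[(nI/d)^2]=n^2/d$.

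Next I would insert the coherence. Separating the $n$ diagonal terms and bounding each off-diagonal overlap by $\mu=\mu(\Phi)$,
$$
  \TR[S^2]=n+\sum_{j\ne l}|\langle\phi_j,\phi_l\rangle|^4 \le n+\mu^2\sum_{j\ne l}|\langle\phi_j,\phi_l\rangle|^2 = n+\mu^2\left(\frac{n^2}{d}-n\right).
$$
Chaining this with $\TR[S^2]\ge 2n^2/(d(d+1))$ and solving $2n^2/(d(d+1))\le n+\mu^2 n(n-d)/d$ for $\mu^2$ produces exactly $\mu^2\ge (2n-d(d+1))/((n-d)(d+1))$. To obtain the inequality for a frame not assumed tight a priori, I would carry out the same estimate inside a linear-programming argument, choosing a quadratic $F$ with $F\le 0$ on $[0,\mu^2]$ and nonnegative coefficients against the orthogonal (Jacobi) polynomials of positive degree associated with complex projective space; this delivers the bound without presupposing $\sum_{j,l}|\langle\phi_j,\phi_l\rangle|^2=n^2/d$, while the moment computation above is its transparent specialization.

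The hard part will be the equality analysis, which is where the ``if and only if'' really lives. When the final bound holds with equality the two nested estimates must be tight at once: the Cauchy--Schwarz step, equality in which is precisely $S\propto\Pi_{\mathrm{sym}}$, i.e. the $2$-design property; and the coherence step, whose equality forces every off-diagonal overlap into the set $\{0,\mu\}$. The robust direction, equality $\Rightarrow$ $2$-design, follows immediately from the first of these. For the converse I must show that the design configurations in play have their off-diagonal overlaps collapse onto the two values $0$ and $\mu$ --- a two-distance property that is not automatic from the $2$-design condition alone, so securing it (equivalently, identifying these as the maximal, tight configurations, which the parenthetical flags by capping them at the $3$-design level) is the delicate point I expect to spend the most effort reconciling with the bare $2$-design hypothesis.
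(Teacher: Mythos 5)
The paper does not prove Theorem~\ref{levbd} at all --- it is quoted from \cite{Levenshtein1992, lev:2017, levinprep} --- so there is no internal proof to compare against and your proposal must stand on its own. Your forward half is correct and is the standard route: with $S=\sum_j\pi_j\otimes\pi_j$ one has $\TR[S]=n$ and $S$ supported on the $d(d+1)/2$-dimensional symmetric subspace, hence $\TR[S^2]\ge 2n^2/(d(d+1))$ with equality exactly at the equally weighted $2$-design condition; chaining this against $\sum_{j\ne l}|\langle\phi_j,\phi_l\rangle|^4\le\mu^2\sum_{j\ne l}|\langle\phi_j,\phi_l\rangle|^2$ and $\sum_{j,l}|\langle\phi_j,\phi_l\rangle|^2=n^2/d$ gives precisely the stated bound, and equality forces both the $2$-design property and the two-distance condition $|\langle\phi_j,\phi_l\rangle|\in\{0,\mu\}$. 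Two caveats: for a frame not known to be tight, Welch only gives $\sum_{j,l}|\langle\phi_j,\phi_l\rangle|^2\ge n^2/d$, which points the wrong way for your chaining, so the deferral to the linear-programming argument is not cosmetic --- even the implication ``equality $\Rightarrow$ $2$-design'' must be run inside the LP framework (where equality forces the moment conditions attached to every positive coefficient of the test polynomial), not in the ``transparent specialization.''

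The converse, which you explicitly leave open, is a genuine gap, and in fact it cannot be closed as the theorem is literally stated: the ``only if'' direction is false without an extra hypothesis. Your diagnosis is exactly right --- equality additionally requires the off-diagonal overlaps to collapse onto the roots of the LP polynomial (here $\{0,\mu^2\}$ as squared moduli), and this does not follow from the $2$-design property. The paper itself furnishes a counterexample later: the union of six orthonormal bases of $\mathbb C^2$ whose $12$ vectors form the icosahedron on the Bloch sphere is an equally weighted projective $5$-design, hence a $2$-design, with $(d,n)=(2,12)$; its squared coherence is $(1+1/\sqrt 5)/2\approx 0.724$, strictly above the Levenstein value $18/30=0.6$ for that pair. (A SIC-POVM listed twice gives an even cruder counterexample.) So rather than trying to squeeze the two-distance property out of the bare $2$-design hypothesis --- which is the effort you say you expect to spend --- you should recognize that the precise equality characterization in \cite{lev:2017} carries the two-valued overlap condition on the design side of the equivalence, and that the survey's statement has dropped it; no proof of the statement as written exists.
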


\subsubsection{SIC-POVMs and Zauner's conjecture}

Among all quantum designs, {\it Symmetric Informational Complex Positive Operator-valued Measures (SIC-POVMs)} seem to be the ones of prime interest to academians for several reasons. Foremost, a given SIC-POVM, $\{P_j\}^{d^2}$ for $\mathbb C^d$, is minimal in the sense that $d^2$ is just enough projections to span the space of matrices, which corresponds to the cheapest possible model for optimal measurement apparatuses.  Moreover, a SIC-POVM $\Phi$ admits an appealing geometry; in particular, they correspond to maximal equiangular tight frames -  meaning they are the frames of largest admissible cardinality, ie $n=d^2$,  such that the absolute inner products between distinct elements is constantly equal to Welch's lower coherence bound~\cite{Welch1974}:
        $$
             \mu(\Phi)\geq \sqrt{\frac{n-d}{d(n-1)}}.
        $$

  In fact, maximal ETFs are precisely where Levenstein's bound and Welch's bound agree\cite{lev:2017}.  Setting $n=d^2$, both bounds evaluate as follows:
 $$\sqrt{\frac{2d^2 -d(d+1)}{(d^2-d)(d+1)}}=\sqrt{\frac{d^2-d}{d(d^2-1)}} = \frac{1}{d+1}
 $$

SIC-POVMs have enjoyed a rich history dating back to the dissertation of Zauner~\cite{Zauner1999} (see also~\cite{Zauner2011} for an English translation), wherein he conjectured the existence of a SIC-POVM in $\mathbb C^d$ for every $d \in \mathbb N$. In fact, his conjecture is more constructive than existential than one might infer from the preceding statement.  In particular, for each $d$, Zauner anticipates that a SIC-POVM may be generated under the Weyl-Heisenberg group, $\mathbb H_d$, by some rank one operator, X -- called a {\it fiducial vector} -- where, in addition, $X$ commutes with some element, $T$, of the underlying Jacobi group $\mathbb H_d \rtimes \mathbb{SL} (\mathbb Z_2,d)$ and where the orbit under $T$ on $\mathbb H_d \backslash Z(\mathbb H_d)$ consists of three elements.

To date, analytic/symbolic verifications of the predicted fiducial vectors - each satisfying the structure predicted by Zauner - have been determined for $d=1,\dots,21,24,28,30,31,35,$
\\
$37,39,43$ and $48$~\cite{Zauner1999, MR2059685, MR1863709, ScottGrassl2010, chienthesis:2015, flammiawebsite, 2017arXiv170305981A}; moreover, due to recent advances, numerical approximations of fiducial vectors seemingly in agreement with Zauner's conjecture have been verified for $d \leq 150$~\cite{MR2059685, 2016arXiv161207308F, 2017arXiv170303993S, 2017arXiv170307901F}.
    An existential or constructive proof that Zauner's conjecture holds in infinitely many dimensions remains an outstanding open problem.
    
       \begin{ex}[A SIC-POVM for $C^2$]
    	With $\omega$ a primitive third root of unity, the columns of the $2 \times 4$  unit norm frame,
    	$$F= \left[
    	\begin{array}{cccc}
    	1 & \frac{1}{3} & \frac{1}{3} & \frac{1}{3}\\
    	0 & \frac{4}{9} & \frac{4}{9} \omega & \frac{4}{9} \omega^2\\
    	\end{array}
    	\right],
    	$$
    	 correspond to a maximal equiangular tight frame, whence generating a SIC-POVM for $\mathbb C^2$,
    	$$\mathcal P = \left\{ f_j \otimes f_j^* \right\}_{j=1}^4.$$  This SIC-POVM manifests as a tetrahedron on the Bloch sphere.    
    \end{ex}

\subsubsection{Maximal sets of mutually unbiased bases}
A pair  of {\it mutually unbiased bases (MUBs)}, $B$ and $B'$, is a pair of orthonormal bases for $\mathcal H$ 
with the property that $|\langle x, y \rangle |^2 = \frac 1 d $ for every $x \in B$ and $y \in B'$.  
It is known~\cite{DelsarteGoethalsSeidel1975} that if $\mathcal B$ is a family of pairwise mutually unbiased bases for $\mathbb C^d$, then $\left| \mathcal B \right| \leq d+1$.  
In the case that a maximal family of mutually unbiased bases $B_1, B_2, \dots, B_{d+1}$ exists, then their union, written as a block matrix, $$F=[B_1 \, B_2 \, \dots \, B_{d+1}],$$
 consists of $n=d(d+1)$ unit vectors, which together achieve  the orthoplex bound~\cite{Rankin1955, ConwayHardinSloane1996, MR3557826}, 
     $$
         \mu(\Phi) \geq \frac{1}{\sqrt{d}},
     $$
 a lower bound on frame coherence which holds whenever $n>d^2$.  Indeed, it is known~\cite{lev:2017} if a unit norm frame, $\Phi$, achieves the orthoplex bound, then $\left| \Phi \right| \leq d(d+1)$. Consequently, maximal families of MUBs are also maximal unit norm frames that achieve the orthoplex bound, and the converse, that maximal unit norm frames that achieves the orthoplex bound necessarily form families of MUBs, is also true~\cite{1523643}. Interestingly, as with the Welch bound, it is only when a maximal family of MUBs exist that Levenstein's bound  agrees with  the orthoplex bound.  Setting $n=d(d+1)$, we see
$$
      \sqrt{
      	\frac{2d(d+1) - d(d+1)}
             {(d(d+1)-d)(d+1)}}
  =
      \frac{1}{\sqrt d}.
$$

Unlike SIC-POVMs, maximal sets of MUBs are a class $2$-designs currently enjoying the certainty of existence in an infinitude of dimensions. Thanks to the work of~cite{CameronSeidel1973}, we know that than maximal sets of MUBs exist whenever $d$ is a prime number, and the authors of~\cite{WoottersFields1989} later extended the result to all prime powers.  Maximal MUBs of prime power order enjoy several straightforward combinatorial constructions, for example, using mutually orthogonal squares\cite{Rao2010} or relative differences~\cite{CASAZZA2017}.  Nevertheless, upon consideration of the first composite dimension, $d=6$, one encounters a well-known open problem:are there any composite dimensions, $d$, for which $d+1$ MUBs exist?  Numerical evidence suggests that no more than three pairwise MUBs coexist in $\mathbb C^6$~\cite{Jaming2010}, but the answer remains unknown.

\begin{ex}[Three MUBs in $\mathbb C^2$]
	The columns of matrices, 
	$$B_1 = \left[\begin{array}{cc}
	1 & 0 \\
	0 & 1 \\
	\end{array} \right], \, \, \, \, 
	B_2 = \frac{1}{\sqrt 2}\left[\begin{array}{cc} 1 & 1 \\ 1 & -1 \end{array}\right]
	\text{ and }
	B_3 = \frac{1}{\sqrt 2}\left[\begin{array}{cc} i & i \\ i & -i\end{array}\right],
	$$
	correspond to a set three MUBs in $\mathbb C^2$, whence a quantum $2$-design of six rank one projections.  The six points of this example manifests as the vertices of an octahedron on the Bloch sphere.
\end{ex}

\subsubsection{Sporadic Levenstein $2$-designs}
Besides SIC-POVMs and maximal sets of MUBs, the authors of~\cite{lev:2017} distilled, via integrality analysis, that there are sparsely many feasible pairs, $(d,n)$, where non-equiangular, non-orthoplectic Levenstein-equality frames exist.  Indeed, to date, there are exactly four known nonequiangular, nonorthoplec unit norm frames which achieve Levenstein's 2nd degree bound, thereby forming quantum $2$-designs in locations where the optimal coherence is characterized neither by the Welch bound nor orthoplex bounds.  We list these four examples in Table~\ref{tbl_four}, along with several other feasible pairs -- up to $d=28$ -- where Levenstein-equality frames might exist based on sufficiency of certain integrality relations.  In the table, $\alpha$ denotes the squared Levenstein bound for the given pair, $(d,n)$, and question marks indicate that no construction of feasible the $2$-design corresponding to the given row is currently known.   

\begin{table}
	\centering
	\begin{tabular}{|c|c|c|c|c|c|}
		\hline
		$d$ & $n$ &  $1/\alpha$ & notes & location \\  \hline
		4 & 40 & 3 &  Witting polytope~\cite{MR1119304} & Hoggar~\cite{Hoggar1982}\\
		5 & 45 & 4 &  Shephard--Todd no.\ 33 & Hoggar~\cite{Hoggar1982}\\
		6 & 126 &  4 & Shephard--Todd no.\ 34 & Hoggar~\cite{Hoggar1982}\\
		7 & 112 & 5 & ? & ?\\
		8 & 120 & 6 & ? & ? \\
		8 & 288 &  5 & ? & ? \\
		9 & 225 & 6 & ? & ? \\
		10 & 220 & 7 & ? & ?\\
		10 & 550 & 6 & ? & ?\\
		11 & 176 & 9  & ? & ? \\
		11 & 231 & 8  & ? & ?\\
		\vdots & \vdots & \vdots & \vdots & \vdots\\
		28 & 1624 & 19 & ? & ?\\
		28 & 4060 & 1755 & Rudvalis group~\cite{MR0335620} & Hoggar~\cite{Hoggar1982}\\
		\hline  
	\end{tabular}
\caption{Admissible Levenstein equality pairs in low dimensions}
\label{tbl_four}
\end{table}

\subsubsection{Rank one Levenstein-type quantum designs with higher design parameters}
Before moving to the next section, where we consider weighted rank one POVMs, we note that Levenstein provided an infinite family of lower coherence bounds by applying linear programming techniques to so called special orthogonal polynomials of increasing degrees.  Up until now, we have only considered his second degree bound (see Theorem~\ref{levbd}), as, unfortunately, it seems that examples that achieve these higher degree bounds are extremely rare, see for example~\cite{Hoggar1989}. Nevertheless, any example that achieves one of these higher degree bounds will form a complex projective $t$-design with a correspondingly higher parameter, $t$, thereby forming a quantum $2$-design.  Presently, we are aware of only one example, a family of six orthonormal bases for $\mathbb C^2$, whose union forms a projective $5$-design, thus yielding a quantum $2$-design of $12$ projectorss~\cite{2016arXiv160502012C}.  This example manifests as the vertices of a regular icosahedron when embedded into the Bloch sphere.

\subsection{Weighted rank one complex projective $2$-designs.}\label{weighted} 
Next, we consider those quantum $2$-designs which manifest, more generally, as 
weighted complex projective $2$-designs.
SIC-POVMs and maximal sets of MUBs, for example, are equally weighted complex projective $2$-designs,  with each $w_j = \frac 1 n$.
The following useful theorem characterizes weighted designs  in terms of their underlying  geometry.

\begin{theorem}\label{th_wtd}[Roy/Scott; \cite{RoyScott2007}]
	A $d\times n$ unit norm tight frame $F=[f_1 \, f_2 \, \dots \, f_n]$, along with normalized weights, $\{w_j\}_{j=1}^n$, $\sum_{j=1}^n w_j=1$,
	forms a complex weighted $2$-design with rank-one projections $\pi_j = f_j \otimes f_j^*$, if and only if
	$$
	\sum\limits_{j=1}^{n}    \sum\limits_{k=1}^{n} w_j w_{k}  \left|\langle f_j, f_{k}  \rangle   \right|^{2t}
	=
	\left(
	\begin{array}{cc}
     d + t -1 \\ t \end{array}
	\right)^{-1}.
	$$ 
\end{theorem}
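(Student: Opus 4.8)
The plan is to recast the weighted $t$-design condition as the vanishing of a single Hilbert--Schmidt distance and to evaluate that distance explicitly. Generalizing the $t=2$ definition recalled earlier, the weighted projections form a weighted complex projective $t$-design exactly when $A = B$, where $A = \sum_{j=1}^n w_j \pi_j^{\otimes t}$ and $B = \frac 1 D \Pi_{\mathrm{sym}}^{(t)}$, with $\Pi_{\mathrm{sym}}^{(t)}$ the orthogonal projection of $(\mathbb C^d)^{\otimes t}$ onto its symmetric subspace and $D = \binom{d+t-1}{t}$ the dimension of that subspace (so $1/D = 2/(d(d+1))$ when $t=2$, matching the definition above). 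Since $A = B$ is equivalent to $\HSNORM{A-B}^2 = 0$, it suffices to prove the identity
$$
   \HSNORM{A-B}^2 = \sum_{j,k=1}^n w_j w_k |\langle f_j, f_k\rangle|^{2t} - \frac 1 D .
$$
The claimed equivalence then drops out, as does the attendant Welch-type bound $\sum_{j,k} w_j w_k |\langle f_j, f_k\rangle|^{2t} \ge 1/D$, with equality if and only if the design property holds.

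To prove that identity I would expand $\HSNORM{A-B}^2 = \TR[A^2] - 2\TR[AB] + \TR[B^2]$ and handle the three traces in turn. The term $\TR[A^2]$ is the heart of the computation: because the trace of a tensor product factorizes, $\TR[\pi_j^{\otimes t}\pi_k^{\otimes t}] = (\TR[\pi_j\pi_k])^t$, and since $\pi_j,\pi_k$ are rank-one projections onto unit vectors, $\TR[\pi_j\pi_k] = |\langle f_j, f_k\rangle|^2$; summing against the weights gives $\TR[A^2] = \sum_{j,k} w_j w_k |\langle f_j, f_k\rangle|^{2t}$. The term $\TR[B^2]$ is immediate, since $B$ is a scalar multiple of a projection: $\TR[B^2] = D^{-2}\TR[\Pi_{\mathrm{sym}}^{(t)}] = 1/D$, which is precisely where the value $D = \binom{d+t-1}{t}$ enters, through the dimension of the symmetric subspace.

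The cross term $\TR[AB]$ carries the one genuinely structural observation. Each $\pi_j^{\otimes t}$ is the rank-one projection onto $\mathbb C\, f_j^{\otimes t}$, and $f_j^{\otimes t}$ lies in the symmetric subspace, so the range of $A$ is contained in that of $\Pi_{\mathrm{sym}}^{(t)}$; hence $\Pi_{\mathrm{sym}}^{(t)} A = A$ and, by cyclicity of the trace, $\TR[AB] = D^{-1}\TR[A]$. Finally $\TR[A] = \sum_j w_j \TR[\pi_j^{\otimes t}] = \sum_j w_j = 1$, using that each $\pi_j$ has unit trace and the weights are normalized, so $\TR[AB] = 1/D$. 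Substituting the three values yields the displayed identity and completes the argument. I expect the only delicate points to be bookkeeping rather than conceptual --- verifying that $A$ is supported on the symmetric subspace and that the trace factorizes over tensor factors --- and I note that the unit-norm tight-frame hypothesis is not actually needed for the equivalence, since the tight-frame property is recovered a posteriori by taking partial traces of the identity $A = B$.
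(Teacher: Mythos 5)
Your argument is correct. Note, however, that the paper does not actually prove this statement: it is quoted as a known result of Roy and Scott, with the proof deferred to \cite{RoyScott2007} (and, for the $t=2$ instance that appears inside the proof of Scott's theorem earlier in the paper, to Theorem 6 of \cite{MR2269701}). Your proposal therefore supplies a self-contained proof where the survey gives none, and it is the standard one: writing $A=\sum_j w_j\pi_j^{\otimes t}$ and $B=\tbinom{d+t-1}{t}^{-1}\Pi_{\mathrm{sym}}^{(t)}$, the identity $\|A-B\|_{HS}^2=\sum_{j,k}w_jw_k|\langle f_j,f_k\rangle|^{2t}-\tbinom{d+t-1}{t}^{-1}$ follows from $\TR[A^2]=\sum_{j,k}w_jw_k|\langle f_j,f_k\rangle|^{2t}$, $\TR[B^2]=\tbinom{d+t-1}{t}^{-1}$, and the key observation that $A$ is supported on the symmetric subspace so that $\TR[AB]=\tbinom{d+t-1}{t}^{-1}\TR[A]=\tbinom{d+t-1}{t}^{-1}$; this is exactly the mechanism the paper invokes implicitly when it identifies the double sum $\sum_{j,l}w_jw_l|\TR[\pi_j\pi_l]|^2$ with a squared Hilbert--Schmidt norm. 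All three trace computations check out. Two small caveats: first, the theorem as printed conflates ``$2$-design'' with a condition indexed by a general $t$; your proof of the general $t$-design equivalence is the right reading. Second, your closing remark slightly overstates what partial traces recover: taking a partial trace of $A=B$ yields $\sum_j w_j\pi_j=I/d$, i.e.\ tightness of the \emph{weighted} system $\{\sqrt{w_j d}\,f_j\}$, not that $F$ itself is a unit norm tight frame (these coincide only for equal weights). This does not affect the equivalence, since your argument uses only that the $f_j$ are unit vectors and the weights sum to one.
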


\subsubsection{Weighted orthonormal bases}
In demonstration of Theorem~\ref{th_wtd}, given that $d=p^s$ or $d+1=p^s$ for some prime $p$ and $s\in \mathbb N$ the authors ~\cite{RoyScott2007} use nonlinear functions on nonabelian groups to construct $m$ families of weighted orthonormal bases, where $m>d+p+1$. Moreover, they show that whenever $d+1=p$ is a prime power, then a weighted complex projective $2$-design comprised of $d+2$ MUBs exist~\cite{RoyScott2007}.  For example, while there is no known construction of seven equally weighted MUBs in $\mathbb C^6$, their weighted construction~\cite{RoyScott2007} forms quantum $2$-design of $8$ weighted MUBS -- or $48$ rank one orthogonal projections -- since $d=6$ is indeed a prime power plus one.

\subsubsection{Weighted unit norm tight frames}
Using the idea of weighting from the previous approach, the authors of~\cite{MR3557826} relax the requirement that a weighted design manifest as a set of bases, instead considering sets of unit norm tight frames, noting that such a union corresponds to automatically forms a projective $1$-design.  By exploiting DFT matrices and combinatorial objects called difference sets, the authors produce families of equiangular tight frames and biangular tight frames, which when appropriately weighted and adjoined with canonical orthonormal bases, form quantum $2$-designs.  More precisely, they produced two infinite families.  The first family consists of $n=d^2+1$ rank one projections for $\mathbb C^d$ whenever $d$ is a prime power plus one, and the second yields  quantum $2$-designs of $n=d^2+d-1$ rank one projectors whenever $d$ is a prime power.

\begin{ex}
	We may think of the standard orthonormal basis and any properly scaled selection of two rows from a $3 \times 3$ DFT matrix as unit norm frames for $\mathbb C^2$.   The union of such frames,
	  $$F_1 = 
	  \left[
	  \begin{array}{cc} 
	  1 & 0 \\ 0 & 1 \end{array} 
	  \right] 
	  \text{ and }
	 F_2 = 
	 \left[
	 \begin{array}{ccc} 
	 1 & 1 & 1 \\ 1 & \omega & \omega^2 
	 \end{array} 
	 \right],$$
	 along with weights corresponding to the matrix columns, $w_1=w_2=1/6$ and $w_3=w_4=w_5=1/12$, form a weighted complex projective $2$-design.  
\end{ex}

A striking by-product of these examples is that they happen to form minimally coherent frames, in particular achieving the orthoplex bound.  In general, there is no reason to expect a weighted $2$-design to have this property.    

\subsection{Quantum $2$-designs with high rank projections}\label{higherrank}
\subsubsection{Zauner's conjecture revisited}
In \cite{Appleby2007}, Appleby shows the existence of quantum $2$-designs of arbitrary but equally ranked values, which behave like SIC-POVMs.  By taking a random simplex and shrinking it appropriately to ``fit'' within the so-called Bloch space~\cite{Appleby2007}, he shows that for every $d$, there is some sub-dimension $l$ with $1\leq l \leq d-1$, and a  quantum $2$-design consisting of $n=d^2$ rank $l$ orthogonal projections.  By ``behave like SIC-POVMs'', we mean these objects provide the desired quantum $2$-design properties and they are symmetric in the sense that their set of pairwise Hilbert Schmidt inner products is a singleton, ie they are equiangular.

\subsubsection{Maximal sets of mutually unbiased subspaces}
In his thesis, Zauner mentions this example of quantum $2$-designs~\cite{Zauner1999}.  Given an ambient Hilbert space with dimension of even prime power, $d=2^t$, $t>1$, one may construct a family of $n=2^{t+1}-2$ orthogonal projections with rank $l=2^{t-1}$ (ie, `half-dimensional' subspaces) which constitute quantum $2$-designs.  In~\cite{2016arXiv160704546B}, the authors provide a concrete construction of this example by exploiting maximal families of MUBs and maximal sets of Johnson codes with optimal distance relations.

\bibliography{universal}

\begin{thebibliography}{10}

\bibitem{Appleby2007}
D.M. Appleby.
\newblock Symmetric informationally complete measurements of arbitrary rank.
\newblock {\em Opt. Spectroscop.}, 103(3):416--428, 2007.

\bibitem{2017arXiv170305981A}
M.~{Appleby}, T.-Y. {Chien}, S.~{Flammia}, and S.~{Waldron}.
\newblock {Constructing exact symmetric informationally complete measurements
  from numerical solutions}.
\newblock {\em ArXiv e-prints}, March 2017.

\bibitem{2016arXiv160704546B}
B.~G. {Bodmann} and J.~I. {Haas}.
\newblock {Maximal Orthoplectic Fusion Frames from Mutually Unbiased Bases and
  Block Designs}.
\newblock {\em ArXiv e-prints}, July 2016.

\bibitem{MR3557826}
Bernhard~G. Bodmann and John Haas.
\newblock Achieving the orthoplex bound and constructing weighted complex
  projective 2-designs with {S}inger sets.
\newblock {\em Linear Algebra Appl.}, 511:54--71, 2016.

\bibitem{2016arXiv160502012C}
P.~G. {Casazza} and J.~I. {Haas}.
\newblock {On the rigidity of geometric and spectral properties of Grassmannian
  frames}.
\newblock {\em ArXiv e-prints}, May 2016.

\bibitem{CASAZZA2017}
Peter~G. Casazza, Amineh Farzannia, John~I. Haas, and Tin~T. Tran.
\newblock Toward the classification of biangular harmonic frames.
\newblock {\em Applied and Computational Harmonic Analysis}, 2017.

\bibitem{chienthesis:2015}
Tuan-Yow CHIEN.
\newblock {\em Equiangular lines, projective symmetries and nice error frames}.
\newblock PhD thesis, University of Auckland, 2015.

\bibitem{MR3196987}
Eric Chitambar, Debbie Leung, Laura Man\v~cinska, Maris Ozols, and Andreas
  Winter.
\newblock Everything you always wanted to know about {LOCC} (but were afraid to
  ask).
\newblock {\em Comm. Math. Phys.}, 328(1):303--326, 2014.

\bibitem{ConwayHardinSloane1996}
J.~H. Conway, R.~H. Hardin, and N.~J.~A. Sloane.
\newblock Packing lines, planes, etc.: packings in {G}rassmannian spaces.
\newblock {\em Experiment. Math.}, 5(2):139--159, 1996.

\bibitem{MR0335620}
J.~H. Conway and D.~B. Wales.
\newblock Construction of the {R}udvalis group of order
  {$145,\,926,\,144,\,000$}.
\newblock {\em J. Algebra}, 27:538--548, 1973.

\bibitem{MR1119304}
H.~S.~M. Coxeter.
\newblock {\em Regular complex polytopes}.
\newblock Cambridge University Press, Cambridge, second edition, 1991.

\bibitem{DelsarteGoethalsSeidel1975}
P.~Delsarte, J.~M. Goethals, and J.~J. Seidel.
\newblock Bounds for systems of lines, and {J}acobi polynomials.
\newblock {\em Philips Research Reports}, 30:91, 1975.

\bibitem{Einstein1905a}
A.~Einstein.
\newblock {\"U}ber einen die erzeugung und verwandlung des lichtes betreffenden
  heuristischen gesichtspunkt.
\newblock {\em Annalen der Physik}, 322:132--148, 1905.

\bibitem{Levineetal1997}
I.~Levine et~al.
\newblock Measurement of the electromagnetic coupling at large momentum
  transfer.
\newblock {\em Phys. Rev. Lett.}, 78:424--427, 1997.

\bibitem{Feynman1982}
R.~P. Feynman.
\newblock Simulating physics with computers.
\newblock {\em International Journal of Theoretical Physics}, 21:467--488,
  1982.

\bibitem{flammiawebsite}
Steve Flammia.
\newblock Exact {S}{I}{C} fiducial vectors.
\newblock Available online: http://www.physics.usyd.edu.au/~sflammia/SIC/.

\bibitem{2017arXiv170307901F}
C.~A. {Fuchs}, M.~C. {Hoang}, and B.~C. {Stacey}.
\newblock {The SIC Question: History and State of Play}.
\newblock {\em ArXiv e-prints}, March 2017.

\bibitem{2016arXiv161207308F}
C.~A. {Fuchs} and B.~C. {Stacey}.
\newblock {QBism: Quantum Theory as a Hero's Handbook}.
\newblock {\em ArXiv e-prints}, December 2016.

\bibitem{lev:2017}
John Haas, Nathaniel Hammen, and Dustin Mixon.
\newblock The levenstein bound for packings in projective spaces.
\newblock {\em SPIE Conference Proceedings}, 2017.

\bibitem{Hoggar1982}
S.~G. Hoggar.
\newblock {$t$}-designs in projective spaces.
\newblock {\em European J. Combin.}, 3(3):233--254, 1982.

\bibitem{Hoggar1989}
S.~G. Hoggar.
\newblock Tight 4 and 5-designs in projective spaces.
\newblock {\em Graphs and Combinatorics}, 5(1):87--94, Dec 1989.

\bibitem{levinprep}
John Isaac~Haas IV, Nathaniel Hammen, and Dustin Mixon.
\newblock Optimal line packings from levenstein's bound.
\newblock {\em In Preparation}, 2017.

\bibitem{Jaming2010}
Philippe Jaming, M{\'a}t{\'e} Matolcsi, and P{\'e}ter M{\'o}ra.
\newblock The problem of mutually unbiased bases in dimension 6.
\newblock {\em Cryptography and Communications}, 2(2):211--220, Sep 2010.

\bibitem{1523643}
A.~Klappenecker and M.~Rotteler.
\newblock Mutually unbiased bases are complex projective 2-designs.
\newblock In {\em International Symposium on Information Theory, 2005. ISIT
  2005.}, pages 1740--1744, Sept 2005.

\bibitem{MR1863709}
Alexander Koldobsky and Hermann K\"onig.
\newblock Aspects of the isometric theory of {B}anach spaces.
\newblock In {\em Handbook of the geometry of {B}anach spaces, {V}ol. {I}},
  pages 899--939. North-Holland, Amsterdam, 2001.

\bibitem{Levenshtein1992}
V.~I. Levenshtein.
\newblock Designs as maximum codes in polynomial metric spaces.
\newblock {\em Acta Applicandae Mathematica}, 29(1):1--82, Nov 1992.

\bibitem{Rankin1955}
R.~A. Rankin.
\newblock The closest packing of spherical caps in {$n$} dimensions.
\newblock {\em Proc. Glasgow Math. Assoc.}, 2:139--144, 1955.

\bibitem{Rao2010}
Asha Rao, Diane Donovan, and Joanne~L. Hall.
\newblock Mutually orthogonal latin squares and mutually unbiased bases in
  dimensions of odd prime power.
\newblock {\em Cryptography and Communications}, 2(2):221--231, Sep 2010.

\bibitem{MR2059685}
J.~M. Renes, R.~Blume-Kohout, A.~J. Scott, and C.~M. Caves.
\newblock Symmetric informationally complete quantum measurements.
\newblock {\em J. Math. Phys.}, 45(6):2171--2180, 2004.

\bibitem{RoyScott2007}
A.~Roy and A.~J. Scott.
\newblock Weighted complex projective 2-designs from bases: optimal state
  determination by orthogonal measurements.
\newblock {\em J. Math. Phys.}, 48(7):072110, 1--24, 2007.

\bibitem{MR2269701}
A.~J. Scott.
\newblock Tight informationally complete quantum measurements.
\newblock {\em J. Phys. A}, 39(43):13507--13530, 2006.

\bibitem{2017arXiv170303993S}
A.~J. {Scott}.
\newblock {SICs: Extending the list of solutions}.
\newblock {\em ArXiv e-prints}, March 2017.

\bibitem{ScottGrassl2010}
A.~J. Scott and M.~Grassl.
\newblock Symmetric informationally complete positive-operator-valued measures:
  a new computer study.
\newblock {\em J. Math. Phys.}, 51(4):042203, 1--16, 2010.

\bibitem{Welch1974}
L.~R. Welch.
\newblock Lower bounds on the maximum cross correlation of signals.
\newblock {\em IEEE Trans. on Information Theory}, 20(3):397--9, May 1974.

\bibitem{WoottersFields1989}
W.~K. Wootters and B.~D. Fields.
\newblock Optimal state-determination by mutually unbiased measurements.
\newblock {\em Ann. Physics}, 191(2):363--381, 1989.

\bibitem{Zauner1999}
G.~Zauner.
\newblock {\em Quantendesigns - {G}rundz{\"u}ge einer nichtkommutativen
  {D}esigntheorie}.
\newblock University Wien (Austria), 1999.
\newblock Dissertation (Ph.D.), English translation in International Journal of
  Quantum Information (IJQI) 9 (1), 445--507, 2011.

\bibitem{Zauner2011}
G.~Zauner.
\newblock Quantum designs: foundations of a noncommutative design theory.
\newblock {\em Int. J. Quantum Inf.}, 9(1):445--507, 2011.

\end{thebibliography}
\bibliographystyle{plain}

\end{document}